\documentclass[conference]{IEEEtran}
\usepackage{cite}
\usepackage{amsmath,amssymb,amsfonts}
\usepackage{algorithmic}
\usepackage{graphicx}
\usepackage{tikz}
\usepackage{subcaption}
\usetikzlibrary{shapes,arrows}
\usepackage{multicol,lipsum}
\usepackage{float}
\usepackage{lipsum}
\usepackage{mathtools}
\usepackage{cuted}
\usepackage{hyperref}
\usepackage{textcomp}
\usepackage{amsmath}
\usepackage{esvect}
\usepackage{bm}

\usepackage{amsmath,bbm,amssymb,amsfonts,amstext,amsopn}

\usepackage{lipsum,amsmath,multicol}
\usepackage{amsthm}
\usepackage{amssymb}
\usepackage{epstopdf}

\usepackage{amsthm}
\usepackage{cite}
\usepackage{balance}
\allowbreak
\usepackage{url}
\usepackage{epsfig}
\usepackage{setspace}
\usepackage{stmaryrd}
\usepackage{psfrag}	
\usepackage{multirow}
\usepackage{float}
\usepackage{amsthm}

\newcommand{\subparagraph}{}
\usepackage[compact]{titlesec}
\usepackage[english]{babel}
\usepackage[process=auto]{pstool}
\usepackage{amsthm}
\theoremstyle{remark}
\newtheorem{remark}{Remark}
\usepackage{etoolbox}
\usepackage{algorithm}
\usepackage[process=auto]{pstool}
\usepackage{epsfig}
\usepackage{caption}
\usepackage{algorithmic}
\usepackage{xcolor}

\newtheorem{lem}{Lemma}

\begin{document}
\title{Joint Uplink-Downlink Resource Allocation for OFDMA-URLLC MEC Systems} 
\author{ Walid R. Ghanem,
Vahid Jamali, Qiuyu Zhang, and Robert Schober  \\
Friedrich-Alexander-University Erlangen-Nuremberg, Germany \quad}
\maketitle
\begin{abstract}
In this paper, we study resource allocation algorithm design for multiuser orthogonal frequency division multiple access (OFDMA) ultra-reliable low latency communication (URLLC) in mobile edge computing (MEC) systems. To achieve the stringent end-to-end delay and reliability requirements of URLLC MEC systems, we propose joint uplink-downlink resource allocation and finite blocklength transmission. Furthermore, we propose a partial time overlap between the uplink and downlink frames to minimize the end-to-end delay, which introduces new time causality constraints. Then, the proposed resource allocation algorithm is formulated as an optimization problem for minimization of the total weighted transmit power of the network under constraints on the minimum quality-of-service regarding the number of computed URLLC user bits within the maximum allowable computing time, i.e., the end-to-end delay of a computation task. Due to the non-convexity of the optimization problem, finding the globally optimal solution entails a high computational complexity which is not tolerable for real-time applications. Therefore, a low-complexity algorithm based on successive convex approximation is proposed to find a high-quality sub-optimal solution. Our simulation results show that the proposed resource allocation algorithm design facilitates the application of URLLC in MEC systems, and yields significant power savings compared to a benchmark scheme.
\end{abstract}
\section{Introduction} 
Future wireless communication networks have several system design objectives including high data rates, reduced latency, and massive device connectivity. One important objective is to enable ultra-reliable low latency communication (URLLC). URLLC will be widely adopted for mission-critical applications such as remote surgery, factory automation, autonomous driving, tactile Internet, and augmented reality to enable real-time machine-to-machine and human-to-machine interaction \cite{popski}. URLLC imposes strict quality-of-service (QoS) constraints including a very low latency (e.g., 1 $\textrm{ms}$) and a low packet error probability (e.g., $10^{-6}$). 

Recently, significant attention has been devoted to studying and developing resource allocation algorithms enabling URLLC. In particular, optimal power allocation in a multiuser time division multiple access (TDMA) URLLC system was considered in \cite{optimal,convexfinite}. Moreover, resource allocation for orthogonal frequency division multiple access (OFDMA)-URLLC systems was studied in \cite{chsejoint,ghanem1,gha}. However, the existing resource allocation schemes in \cite{optimal,convexfinite,ghanem1,gha} focused only on communication while computation was not considered. Nevertheless, devices in mission-critical applications will also generate tasks that require computation within a given time. Therefore, resource allocation algorithm design for efficient computation in URLLC systems has to be investigated.

A promising solution to enable efficient and fast computation for URLLC devices is mobile edge computing (MEC). MEC enhances the battery lifetime and reduces the power consumption of users with delay-sensitive tasks. By offloading these tasks to nearby MEC servers, the power consumption and computation time at the local users can be considerably reduced at the expense of the power required for the data transmission for offloading. Thus, efficient resource allocation algorithm design is paramount for MEC for optimization of the available resources (e.g., power and bandwidth) while guaranteeing the maximum delay for the computation tasks. Existing resource allocation algorithms for MEC designs, such as \cite{energymec,Zhoumec}, were designed based on Shannon's capacity formula. In particular,  the authors in \cite{energymec} studied energy-efficient resource allocation for MEC, while computation rate maximization was considered in \cite{Zhoumec}. However, if the resource allocation design for URLLC MEC systems is based on Shannon's capacity formula, the reliability of the offloading and downloading processes cannot be guaranteed. To cope with this issue, recent works applied finite blocklength transmission (FBT)\cite{Polyanskiy} for resource allocation algorithm design for URLLC MEC systems. In particular, the authors in \cite{9048917} studied binary offloading in single-carrier TDMA systems. However, single-carrier systems suffer from poor spectrum utilization and require complex equalization at the receiver. In \cite{DeepMEC}, the authors investigated the minimization of the normalized energy consumption for OFDMA. However, the algorithm proposed in \cite{DeepMEC} assumes that the channel gain is identical for different sub-carriers which may not be realistic for broadband wireless channels. Moreover, the resource allocation algorithms proposed in \cite{DeepMEC} are based on a simplified version of the general expression for the achievable rate for FBT \cite{Polyanskiy}. Furthermore, the existing MEC designs, such as \cite{energymec,uavmec}, do not take into account the size of the computation result of the tasks and do not consider the communication resources consumed for downloading of the processed data by the users. Nevertheless, the size of the processed data can be large for applications such as augmented reality URLLC. To the best of the authors' knowledge, joint uplink-downlink resource allocation for OFDMA-URLLC MEC systems has not been considered in the literature, yet.

Motivated by the above discussion, in this paper, we propose a novel power-efficient joint uplink-downlink resource allocation algorithm design for multiuser OFDMA-URLLC MEC systems. To reduce the end-to-end delay of the uplink and downlink transmission while efficiently exploiting the available spectrum, we propose a partial time overlap between the uplink and downlink frames which introduces new causality constraints. Then, the resource allocation algorithm design is formulated as an optimization problem for the minimization of the total weighted power consumed by the base station (BS) and the users subject to QoS constraints for the URLLC users. The QoS constraints include the minimum required number of bits computed within the maximum allowable time for computation, i.e., the maximum end-to-end delay of each user. The formulated optimization problem is a non-convex mixed-integer problem that is difficult to solve globally. Thus, we develop a low-complexity sub-optimal algorithm based on successive convex approximation (SCA) in order to find a locally optimal solution.
% To obtain the proposed solution, we exploit techniques such as penalized SCA for difference of convex programming problems. Furthermore, our simulations show that the proposed algorithm achieves significant performance gains compared to a benchmark scheme.\color{black} 

\textit{Notation}: Lower-case letters $x$ refer to scalar numbers, while bold lower-case letters $\mathbf{x}$ represent vectors. $(\cdot)^{T}$ denotes the transpose operator. 
$\mathbb{R}^{N \times 1}$ represents the set of all $N \times 1$ vectors with real valued entries. The circularly symmetric complex Gaussian distribution with mean $\mu$ and variance $\sigma^{2}$ is denoted by $\mathcal{CN}(\mu,\sigma^{2})$, $\sim$ stands for ``distributed as", and $\mathcal{E}\{\cdot\}$ denotes statistical expectation. $\nabla_{\mathbf{x}}f(\mathbf{x})$ denotes the gradient vector of function $f(\mathbf{x})$ and its elements are the partial derivatives of $f(\mathbf{x})$.
\section{System and Channel Models}
In this section, we present the considered system and channel models for OFDMA-URLLC MEC systems.
\subsection{System Model}
We consider a single-cell multiuser MEC system which comprises a BS and $K$ URLLC users indexed by $k =\{1,\dots,K\}$, cf. Fig.~\ref{model}. All transceivers have single antennas. The system employs frequency division duplex (FDD)\footnote{In FDD systems, different frequency bands are assigned to uplink and downlink.}. Thereby, the total bandwidth $W$ is divided into two bands for uplink and downlink having bandwidths $W^{u}$ and $W^{d}$, respectively. The bandwidths for uplink and downlink are further divided into $M^{u}$ and $M^{d}$ orthogonal sub-carriers indexed by $m^{u} =\{1,\dots,M^{u}\}$ and $m^{d} =\{1,\dots,M^{d}\}$, respectively. The bandwidth of each sub-carrier is $BW_{
	s}$. Thus, the symbol duration is $T_{s}=\frac{1}{BW_{s}}$. The uplink and downlink frames are divided into $N^{u}$ time slots indexed by $n^{u} =\{1,\dots,N^{u}\}$ and $N^{d}$ time slots indexed by $n^{d} =\{1,\dots,N^{d}\}$, respectively. Moreover, each time slot contains one OFDM symbol. The downlink transmission starts after $\tau$ time slots. Thus, uplink and downlink transmission overlap in $\bar{O}=N^{u}-\tau$ time slots. The value of $\tau$ is a design parameter. On the one hand, if $\tau$ is chosen too small, the users' information bits to be computed may have not yet arrived at the BS and hence the downlink resource is wasted. On the other hand, if $\tau$ is chosen too large, the computed bits at the BS have to wait before being transmitted to the users, which increases the end-to-end delay, see Fig. \ref{model}. Each user has one computation task ($B_{k}$, $D_{k}$) that needs to be processed, where $B_{k}$ is the task length in bits and $D_{k}$ is the required time for computation in time slots. Moreover, we assume that all users offload their tasks to the MEC server. The maximum transmit power of the BS is $P_{\text{max}}$, while the maximum transmit power of each user in the uplink is $P_{k,\text{max}}$. 

In order to facilitate the presentation, in the following, we use superscript
$j \in \{u,d\}$ to denote uplink $u$ and downlink $d$.
\begin{remark}
	The power and time consumed for channel estimation and resource allocation are constant and will not affect the validity of the proposed resource allocation algorithm. For simplicity of illustration, they are neglected in this paper. Furthermore, perfect channel state information (CSI) is assumed to be available at the BS for resource allocation design to obtain a performance upper bound for OFDMA-URLLC MEC systems.
\end{remark}

\begin{figure}[t]
	\centering
	\scalebox{0.23}{
		\pstool{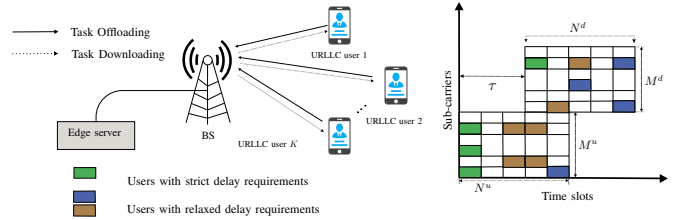}{
			\psfrag{e}[c][c][2]{Task Offloading}
			\psfrag{f}[c][c][2]{Sub-carriers}
			\psfrag{g}[c][c][2]{Task Downloading}
			\psfrag{a}[c][c][1.5]{URLLC user 1}	
			\psfrag{o}[c][c][2.5]{$\tau$}
			\psfrag{b}[c][c][1.5]{URLLC user 2}		
			\psfrag{c}[c][c][1.5]{\;\;URLLC user $K$}
			\psfrag{n}[c][c][1.8]{BS}
			\psfrag{s}[c][c][1.8]{Edge server}
			\psfrag{r}[c][c][2]{$\;\;$Users with relaxed delay requirements}
			\psfrag{h}[c][c][2]{Users with strict delay requirements}
			\psfrag{t}[c][c][2]{Time slots}
		    \psfrag{tu}[c][c][2]{$N^{u}$}
		    \psfrag{td}[c][c][2]{$N^{d}$}
		    \psfrag{wd}[c][c][2]{$M^{d}$}
	        \psfrag{wu}[c][c][2]{$M^{u}$}
	}}
	\caption{ Multiuser MEC system with a single BS with an edge server and $K$ URLLC users.}
	\label{model}
	\vspace{-0.5cm}
\end{figure}

%\begin{figure}[t]
%	\centering
%	\scalebox{0.4}{
%		\pstool{ccc.eps}{
%			\psfrag{e}[c][c][2.5]{Task Offloading}
%			\psfrag{f}[c][c][2.5]{Sub-carriers}
%			\psfrag{g}[c][c][2.5]{Task Downloading}
%			\psfrag{a}[c][c][2]{URLLC user 1}	
%			\psfrag{o}[c][c][2.5]{$\tau$}
%			\psfrag{b}[c][c][2]{URLLC user 2}		
%			\psfrag{c}[c][c][2]{URLLC user $K$}
%			\psfrag{n}[c][c][1.8]{BS}
%			\psfrag{s}[c][c][1.8]{Edge server}
%			\psfrag{r}[c][c][2.5]{$\;\;$Users with relaxed delay requirements}
%			\psfrag{h}[c][c][2.5]{Users with strict delay requirements}
%			\psfrag{t}[c][c][2.5]{Time slots}
%	}}
%	\caption{ Multiuser MEC system with a single BS with an edge server and $K$ URLLC users.}
%	\label{model}
%	\vspace{-0.5cm}
%\end{figure}
\subsection{Uplink and Downlink Channel Models}
In the following, we introduce the uplink and downlink channel models for OFDMA-URLLC MEC systems. We assume that the channel gains of all users for all sub-carriers are constant during uplink and downlink transmission. 
In the uplink, the signal received at the BS from user $k$ on sub-carrier $m^{u}$ in time slot $n^{u}$ is given as follows:
\begin{equation}\label{su2a}\hspace{-0.65cm}
y^{u}_{k}[m^{u},n^{u}]=h_{k}^{u}[m^{u}]x^{u}_{k}[m^{u},n^{u}]+z^{u}_{BS}[m^{u},n^{u}],
\end{equation}
where $x^{u}_{k}[m^{u},n^{u}]$ denotes the symbol transmitted by user $k$ on sub-carrier $m^{u}$ in time slot $n^{u}$ to the BS. Moreover, $z^{u}_{BS}[m^{u},n^{u}]\sim \mathcal{CN}(0,\sigma^{2})$ denotes the noise at the BS\footnote{Without loss of generality, we assume that the noise processes at all receivers have identical variances.}, and $h^{u}_{k}[m^{u}]$ represents the complex channel coefficient between user $k$ and the BS on sub-carrier $m^{u}$. Moreover, for future use, we define the signal-to-noise ratio (SNR) of user $k$'s signal at the input of the BS's receiver on sub-carrier $m^{u}$ in time slot $n^{u}$ as follows:
\begin{equation}
\gamma^{u}_{k}[m^{u},n^{u}]=g^{u}_{k}[m^{u}]p^{u}_{k}[m^{u},n^{u}],
\end{equation}
where $p^{u}_{k}[m^{u},n^{u}]=\mathcal{E}\{|x^{u}_{k}[m^{u},n^{u}]|^{2}\}$ is the uplink transmitted power of user $k$ on sub-carrier $m^{u}$ in time slot $n^{u}$, and $g^{u}_{k}[m^{u}]=\frac{|h^{u}_{k}[m^{u}]|^{2}}{\sigma^{2}}$. A similar channel model is adopted for downlink transmission and the corresponding SNR at user $k$ on sub-carrier $m^{d}$ in time slot $n^{d}$ is denoted by $\gamma^{d}_{k}[m^{d},n^{d}]$.

%\subsubsection{Downlink Channel Model}
%The received signal at user $k$ from the BS on sub-carrier $m^{d}$ in time slot $n^{d}$ is modelled as:
%\begin{equation}\label{su2}\hspace{-0.65cm}
%y^{d}_{k}[m^{d},n^{d}]=h_{k}^{d}[m^{d}]x^{d}_{k}[m^{d},n^{d}]+z^{d}_{k}[m^{d},n^{d}],
%\end{equation}
%where $x^{d}_{k}[m^{d},n^{d}]$ denotes the symbol transmitted from the BS to user $k$ in time slot $n^{d}$. Moreover, $z^{d}_{BS}[m^{d},n^{d}]\sim \mathcal{CN}(0,\sigma^{2})$ denotes complex Gaussian noise with zero-mean and variance $\sigma^{2}$, and $h^{d}_{k}[m^{d}]$ represents the complex channel coefficient between user $k$ and the BS on sub-channel $m^{d}$. Moreover, similar to the uplink channel, we define the SNR at the input of user $k$ on sub-channel $m^{d}$ in time slot $n^{d}$ as follows:
%\begin{equation}
%\gamma^{d}_{k}[m^{d},n^{d}]=g^{d}_{k}[m^{d}]p^{d}_{k}[m^{d},n^{d}],
%\end{equation}
%where $p^{d}_{k}[m^{d},n^{d}]=\mathcal{E}\{|x^{d}_{k}[m^{d},n^{d}]|^{2}\}$ is the downlink transmitted power for user $k$ on sub-channel $m^{d}$ in time slot $n^{d}$, and $g^{d}_{k}[m^{d}]=\frac{|h^{d}_{k}[m^{d}]|^{2}}{\sigma^{2}}$.
\subsection{Achievable Rate for FBT}
Shannon's capacity theorem, on which most conventional resource allocation designs are based, applies to the asymptotic case where the packet length approaches infinity and the decoding error probability goes to zero \cite{shannon}. Thus, it cannot be used for resource allocation design for URLLC systems, as URLLC systems have to employ short packets to achieve low latency, which makes decoding errors unavoidable. For the performance evaluation of FBT, the so-called normal approximation for short packet transmission was developed in \cite{thesis}. For parallel complex AWGN channels, the maximum number of bits $\Psi$ conveyed in a packet comprising $L$ symbols can be approximated as follows\cite[Eq. (4.277)]{thesis},\cite[Fig. 1]{Erseghe1}:
\begin{align}\label{normalapproximation}
	\Psi=\sum_{l=1}^{L}\log_{2}(1+\gamma[l])-aQ^{-1}(\epsilon)\sqrt{\sum_{l=1}^{L}{V}[l]},
\end{align}
where $\epsilon$ is the decoding packet error probability, and $Q^{-1}(\cdot)$ is the inverse of the Gaussian Q-function with $
Q(x)=\frac{1}{\sqrt{2\pi}}\int_{x}^{\infty}\text{exp}{\left(-\frac{t^{2}}{2}\right)}\text{d}t$.
$V[l]=(1-{(1+\gamma[l])^{-2}})$
and $\gamma[l]$ are the channel dispersion \cite{thesis} and the SNR of the $l$-th symbol, respectively, and  $a=\log_{2}(\text{e})$. \color{black}

In this paper, we base the joint uplink-downlink resource allocation algorithm design for OFDMA-URLLC MEC systems on (\ref{normalapproximation}). By allocating several resource blocks from the available resources to a given user, the number of offloaded and downloaded bits of the user can be adjusted.
\section{Problem Formulation}
In this section, we explain the offloading and downloading process and introduce the QoS requirements of the URLLC MEC users. Moreover, we formulate the proposed resource allocation optimization problem.
\subsection{Offloading and Downloading}
The edge computing process is performed as follows. First, each user offloads its data to the edge server in the uplink. Subsequently, the edge server processes this data and sends the results back in the downlink to the user. Thus, uplink and downlink should satisfy the following constraints: 
\begin{align}&\label{offu1}\hspace{-0.25cm}
\mathrm{C1:}\Psi^{u}_{k}(\mathbf{s}_{k}^{u},\mathbf{p}_{k}^{u}) \geq B_{k},\forall k,\
\mathrm{C2:}\Psi^{d}_{k}(\mathbf{s}_{k}^{d},\mathbf{p}_{k}^{d})\geq \Gamma_{k}B_{k},\forall k,
\end{align} 
where
\begin{equation}\label{offu2}
\Psi^{j}_{k}(\mathbf{s}_{k}^{j},\mathbf{p}_{k}^{j})=	C^{j}_{k}(\mathbf{s}_{k}^{j},\mathbf{p}_{k}^{j})-V^{j}_{k}(\mathbf{s}_{k}^{j},\mathbf{p}_{k}^{j}), 
\end{equation}
and
\begin{align}\label{b2}\hspace{-0.5cm}
C^{j}_{k}(\mathbf{s}_{k}^{j},\mathbf{p}_{k}^{j})=\sum_{m^{j}=1}^{M^{j}}\sum_{n^{j}=1}^{N^{j }}s^{j}_{k}[m^{j},n^{j}]\log_{2}(1+\gamma^{j}_{k}[m^{j},n^{j}]),
\end{align} 
\begin{align}\label{b3}\hspace{-0.55cm}
V^{j}_{k}(\mathbf{s}^{j}_{k},\mathbf{p}_{k}^{j})=aQ^{-1}(\epsilon^{j}_{k})\sqrt{{\sum_{m^{j}=1}^{M^{j}}\sum_{n^{j}=1}^{N^{j}}s^{j}_{k}[m^{j},n^{j}]V^{j}_{k}[m^{j},n^{j}]}}.
\end{align} 
Here, $s^{j}_{k}[m^{j},n^{j}]=\{0,1\}, \forall m^{j},n^{j},$ are the sub-carrier assignment indicators. If sub-carrier $m^{j}$ in time slot $n^{j}$ is assigned
to user $k$, we have $s^{j}_{k}[m^{j},n^{j}]=1$, otherwise $s^{j}_{k}[m^{j},n^{j}]=0$. Furthermore, we assume that each sub-carrier is allocated to at most one user to avoid multiple access interference. $p^{j}_{k}[m^{j},n^{j}]$ is the power allocated to user $k$ on sub-carrier $m^{j}$ in time slot $n^{j}$. $\mathbf{s}_{k}^{j}$ and $\mathbf{p}_{k}^{j}$ are the collections of optimization variables $s^{j}_{k}[m^{j},n^{j}], \forall m^{j},n^{j}$, and $p^{j}_{k}[m^{j},n^{j}], \forall m^{j},n^{j}, \forall j$, respectively, and
$
V^{j}_{k}[m^{j},n^{j}]=(1-(1+\gamma^{j}_{k}[m^{j},n^{j}])^{-2})$. Constraints $\mathrm{C1}$ and $\mathrm{C2}$ guarantee for user $k$ the transmission of $B_{k}$ bits in the uplink and $\Gamma_{k}B_{k}$ bits in the downlink, respectively. Moreover, $\Gamma_{k},\forall k,$ is the ratio of the sizes of the computation results and the offloaded task. The value of $\Gamma_{k}$ depends on the application type, e.g., $\Gamma_{k}> 1$ is expected for augmented reality applications. \cite{jointsubchannel}. 
%To guarantee the reliability of the computation task offloading,  the packet loss probabilities in uplink transmission and downlink transmissions, $\epsilon^{j}_{k}, \forall j,$ should satisfy the following reliability requirement: 
%\begin{equation}
%(1-\epsilon^{u}_{k})(1-\epsilon^{d}_{k}) \approx 1-\epsilon^{d}_{k}-\epsilon^{u}_{k}=1-\epsilon^{\text{max}}_{k},
%\end{equation}
%where the approximation is accurate since $\epsilon^{j}_{k}, \forall j,$ are
%extremely small. Therefore, as in \cite{chsejoint},  we set the upper bound of the decoding error probability in the uplink and downlink to be equal, i.e., $
%\epsilon^{j}_{k}=0.5\epsilon^{\text{max}}_{k}, \forall j,$
%As shown in \cite{chsecross}, setting different packet loss probabilities to be equal leads to minor power loss.
\subsection{Causality and Delay}
In the following, we explain the causality and delay constraints.

\textit{1) Causality:} According to Fig.~\ref{model}, downlink transmission cannot start for a given user before all data of this user has been received at the BS via the uplink. Thus, we impose the following causality constraints\footnote{In this paper, we neglect the computation time and power consumption at the edge server, and we only focus on uplink and downlink transmission. This model is valid when the edge server has sufficient processing and computation resources to carry out the small tasks of URLLC users.}:
\begin{align}&\label{Co}\hspace{-.5cm}
\mathrm{C3:}s_{k}^{u}[m^{u},\tau +o]+s^{d}_{k}[m^{d},n^{d}]\leq 1 ,\nonumber\\&\hspace{1
	cm} \forall o=\{1,\dots,\bar{O}\}, \forall k, \forall m^{u}, \forall n^{d}=\{1,\dots,o\}, \forall m^{d}.
\end{align}
This constraint ensures  that the downlink transmission for a particular user cannot start before its data has arrived at the BS. 

\textit{2) Delay:} The delay of a computation task is limited by requiring the downlink transmission to be finished before $D_{k}-\tau$ time slots as follows:
\begin{align}\hspace{-3cm}
\mathrm{C4:}s^{d}_{k}[m^{d},n^{d}] = 0, \forall n^{d} \geq D_{k}-\tau.
\end{align} 
The total latency of a computation task is determined by $D_{k}$ and $\tau$. Note that the values of $D_{k}$ and $\tau$ are known for resource allocation.
\subsection{Optimization Problem Formulation}
In the following, we formulate the resource allocation design problem with the objective to minimize the total weighted network power consumption, while satisfying the latency requirements for the users' task computation. In particular, we optimize the power and sub-carrier assignments in uplink and downlink. To this end, the optimization problem is formulated as follows:  
\allowdisplaybreaks
\begin{align}\label{Op1}
&\underset {\mathbf{s}^{u}, \mathbf{p}^{u}, \mathbf{s}^{d}, \mathbf{p}^{d}}{\text{min}}\sum_{k=1}^{K}w_{k}\sum_{m^{u}=1}^{M^{u}}\sum_{n^{u}=1}^{N^{u}}s^{u}_{k}[m^{u},n^{u}]p^{u}_{k}[m^{u},n^{u}]\\&\qquad\nonumber \qquad+\sum_{k=1}^{K}\sum_{m^{d}=1}^{M^{d}}\sum_{n^{d}=1}^{N^{d}}s^{d}_{k}[m^{d},n^{d}]p^{d}_{k}[m^{d},n^{d}]\\
&\text{s.t.} \; \mathrm{C1-C4}, \;\mathrm{C5:}\sum_{k=1}^{K}s^{u}_{k}[m^{u},n^{u}] \leq 1, \forall m^{u},n^{u},\nonumber\\&\quad\;\;
\mathrm{C6:} s^{u}_{k}[m^{u},n^{u}] \in \{0,1\}, \forall k,m^{u},n^{u},\nonumber\\&  \quad\;\;
\mathrm{C7:}  \sum_{m^{u}=1}^{M^{u}}\sum_{n^{u}=1}^{N^{u}}s^{u}_{k}[m^{u},n^{u}]p^{u}_{k}[m^{u},n^{u}] \leq P_{k,\text{max}}, \forall k, \nonumber\\&  \quad\;\;
\mathrm{C8:} p^{u}_{k}[m^{u},n^{u}] \geq 0, \forall k, m^{u}, n^{u},\nonumber\\& \quad\;\;
\mathrm{C9:}	\sum_{k=1}^{K}s^{d}_{k}[m^{d},n^{d}] \leq 1, \forall m^{d},n^{d},\nonumber\\&  \quad\;\; \mathrm{C10:} s^{d}_{k}[m^{d},n^{d}] \in \{0,1\}, \forall k,m^{d},n^{d}, \nonumber \nonumber\\&  \quad\;\; \mathrm{C11:} \sum_{k=1}^{K}\sum_{m^{d}=1}^{M^{d}}\sum_{n^{d}=1}^{N^{d}}s^{d}_{k}[m^{d},n^{d}]p^{d}_{k}[m^{d},n^{d}] \leq P_{\text{max}},\nonumber\\&  \quad\;\;
\mathrm{C12:} p^{d}_{k}[m^{d},n^{d}] \geq 0, \forall k, m^{d}, n^{d},\nonumber
\end{align}
where $\mathbf{s}^{j},\forall j,$ and $\mathbf{p}^{j},\forall j,$ are the collections of optimization variables $\mathbf{s}_{k}^{j}, \forall k,j$, and $\mathbf{p}_{k}^{j}, \forall k,j$, respectively. Moreover, $w_{k}\geq 1, \forall k,$ are weights that allow the prioritization of the uplink power consumption compared to the downlink power consumption. 

In (\ref{Op1}), constraints $\mathrm{C1}$ and $\mathrm{C2}$ guarantee the transmission of a minimum number of bits from user $k$ to the BS in the uplink and from the BS to user $k$ in the downlink, respectively. Constraint $\mathrm{C3}$ is the uplink-downlink causality constraint and constraint $\mathrm{C4}$ ensures that user $k$ is served within its delay requirements. Constraints $\mathrm{C5}$ and $\mathrm{C6}$ for the uplink and constraints $\mathrm{C9}$ and $\mathrm{C10}$ for the downlink are imposed to ensure that each sub-carrier in a given time slot is allocated to only one user.  Constraints $\mathrm{C7}$ and $\mathrm{C11}$ are the total power constraints for user $k$ and the BS, respectively. Constraints $\mathrm{C8}$ and $\mathrm{C12}$ are the  non-negative transmit power constraints. 

Optimization problem (\ref{Op1}) is a mixed-integer non-convex problem. The non-convexity has the following reasons. First, the optimization variables in the objective function and the constraints are coupled, e.g., $\mathrm{C1}$ and $\mathrm{C7}$. Second, the achievable rate for FBT has a non-convex structure. Finally, the integer constraints $\mathrm{C6, C10}$ are non-convex. In general, non-convex optimization problems cannot be solved optimally in polynomial time. Hence, in the next section, we focus on developing a sub-optimal solution, where the SCA method is employed for computational efficiency and real-time applicability.
\section{Solution of the Problem}
In this section, we first transform the problem in (\ref{Op1}) into a more tractable equivalent form. In particular, we first employ the Big-M formulation. Then, we use the difference of convex programming and SCA approaches in order to solve the optimization problem in (\ref{Op1}) iteratively. The main steps of the proposed low-complexity algorithm are summarized in Fig.~\ref{alg}.\hspace{-0.3cm}
 \begin{figure}[t!]	
	\centering
	\tikzstyle{decision} = [diamond, draw, fill=green!12, 
	text width=5em, text badly centered, node distance=3.5cm, inner sep=0pt]
	\tikzstyle{block} = [rectangle, draw, fill=blue!12, 
	text width=2.5cm, text centered, rounded corners, minimum height=1.5cm]
	\tikzstyle{line} = [draw, -latex']
	\tikzstyle{opt} = [rectangle,text badly centered,draw,fill=green!20, node distance=2.5cm, minimum height=1.7cm]
	\scalebox{.6}{
		\begin{tikzpicture} [node distance =3.2cm, auto] 
	\node[opt,node distance=8cm]  (a) {Non-convex Problem (\ref{Op1})};  
		\node[block,right of= a,node distance=4.2cm] (b) {1-Big-M Formulation}; 
%		\node[block,right of=b] (c) {2)-Causality Constraint Reformulation};
		\node[block,right of=b] (c) {2-DC Reformulation};
		\node[block,right of=c] (d) {3-SCA \\ Sub-optimal Solution};
		\draw[line] (a)-- (b);  
		\draw[line] (b)-- (c);  
		\draw[line] (c)-- (d);  
		\end{tikzpicture}} 
	\caption{Illustration of the key steps of the proposed low-complexity scheme.}
\label{alg}\vspace{-0.3cm}
\end{figure}
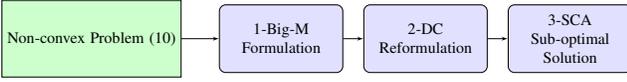\\
\subsection{Problem Transformation}
To deal with the non-convex product terms in optimization problem (\ref{Op1}), the Big-M method is employed\cite{yan}. 

\textbf{Step 1 (Big-M Formulation\footnote{ For more details on the big M-formulation,
		please refer to \cite[Section~2.3]{Leemixed}.}):} Let us introduce new optimization variables as 
\begin{align}&\label{prod}
\bar{p}_{k}^{j}[m^{j},n^{j}]=s_{k}^{d}[m^{j},n^{j}] {p}_{k}^{j}[m^{j},n^{j}], \forall k,m^{j},n^{j}, \forall j.
\end{align}
Now, we decompose the product terms above using the Big-M formulation (McCormick envelopes) and impose the following
additional constraints\cite{Leemixed}:
\begin{align}&
\mbox {C13}: \bar{p}^{u}_{k}[m^{u},n^{u}]\leq P_{k,\text{max}} s^{u}_{k}[m^{u},n^{u}], \forall k,m^{u},n^{u}, \\& 
\mbox {C14}: \bar{p}^{u}_{k}[m^{u},n^{u}]\leq p_{k}^{u}[m^{u},n^{u}], \forall k,m^{u},n^{u},\\ &  \mbox {C15}: \bar{p}_{k}^{u}[m^{u},n^{u}]\geq p_{k}^{u}[m^{u},n^{u}]\nonumber\\& \hspace{1.5cm} -(1-s_{k}^{u}[m^{u},n^{u}])P_{k,\text{max}}, \forall k,m^{u},n^{u},\quad  \\ & \mbox {C16}: \bar{p}_{k}^{u}[m^{u},n^{u}]\geq 0, \ \ \forall k,m^{u},n^{u},\quad \\ & \mbox {C17}: \bar{p}^{d}_{k}[m^{d},n^{d}]\leq P_{\text{max}} s^{d}_{k}[m^{d},n^{d}],  \forall k,m^{d},n^{d}, \\ &  \mbox {C18}: \bar{p}^{d}_{k}[m^{d},n^{d}]\leq p_{k}^{d}[m^{d},n^{d}], \ \ \ \forall k,m^{d},n^{d},\\ &  \mbox {C19}: \bar{p}_{k}^{d}[m^{d},n^{d}]\geq p_{k}^{d}[m^{d},n^{d}] \nonumber\\& \hspace{2.5cm}-(1-s_{k}^{d}[m^{d},n^{d}])P_{\text{max}}, \   \forall k,m^{d},n^{d},\quad \hspace{-4mm} \\ & \mbox {C20}: \bar{p}_{k}^{d}[m^{d},n^{d}]\geq 0, \ \ \forall k,m^{d},n^{d}.
\end{align}
The non-convex product terms $s_{k}^{d}[m^{j},n^{j}] {p}_{k}^{j}[m^{j},n^{j}], \forall k,m^{j},n^{j}, \forall j$ in (\ref{prod}) are transformed into a set of convex linear inequalities. Note that constraints $\mbox {C13-C20}$ do not change the feasible set.
Now, optimization problem (\ref{Op1}) is transformed into the following equivalent form: 
\begin{align}\label{Op2}
&\underset {\mathbf{s}^{u}, \mathbf{p}^{u}, \mathbf{s}^{d}, \mathbf{p}^{d},\bar{\mathbf{p}}^{u},\bar{\mathbf{p}}^{d}}{\text{min}}\sum_{k=1}^{K}w_{k}\sum_{m^{u}=1}^{M^{u}}\sum_{n^{u}=1}^{N^{u}}\bar{p}^{u}_{k}[m^{u},n^{u}]\\ &\qquad \qquad+\sum_{k=1}^{K}\sum_{m^{d}=1}^{M^{d}}\sum_{n^{d}=1}^{N^{d}}\bar{p}^{d}_{k}[m^{d},n^{d}]\nonumber\\
&\text{s.t.} \; \mathrm{C1:}	\bar{C}^{u}_{k}(\bar{\mathbf{p}}_{k}^{u})-\bar{V}^{u}_{k}(\bar{\mathbf{p}}_{k}^{u})\geq B_{k},\forall k,\nonumber\\ &\quad\;\;\mathrm{C2:}		\bar{C}^{d}_{k}(\bar{\mathbf{p}}_{k}^{d})-\bar{V}^{d}_{k}(\bar{\mathbf{p}}_{k}^{d})\geq \Gamma_{k}B_{k} ,\forall k,\nonumber\\ &\quad\;\;\mathrm{C3-C6,} \;
\mathrm{C7:}  \sum_{m^{u}=1}^{M^{u}}\sum_{n^{u}=1}^{N^{u}}\bar{p}^{u}_{k}[m^{u},n^{u}] \leq P_{k,\text{max}}, \forall k,\nonumber \\&  \quad\;\;
\mathrm{C8-C10,} \; \mathrm{C11:} \sum_{k=1}^{K}\sum_{m^{d}=1}^{M^{d}}\sum_{n^{d}=1}^{N^{d}}\bar{p}☻^{d}_{k}[m^{d},n^{d}] \leq P_{\text{max}},\nonumber \\&  \quad\;\;\mathrm{C12, C13-C20.}\nonumber
\end{align}
where \vspace{-0.35cm}
\begin{align}&\label{c1}
\bar{C}^{j}_{k}(\mathbf{\bar{p}}_{k}^{j})=\sum_{m^{j}=1}^{M^{j}}\sum_{n^{j}=1}^{N^{j }}\log_{2}(1+\bar{\gamma}^{j}_{k}[m^{j},n^{j}]),\\&
%\end{align} \vspace{-0.5cm}
%\begin{align}\label{c2}\vspace{-0.55cm}
\bar{V}^{j}_{k}(\mathbf{\bar{p}}_{k}^{j})=aQ^{-1}(\epsilon^{j}_{k})\sqrt{\sum_{m^{j}=1}^{M^{j}}\sum_{n^{j}=1}^{N^{j}}\bar{V}^{j}_{k}[m^{j},n^{j}]},
\end{align} 
$\bar{\gamma}^{j}_{k}[m^{j},n^{j}]=g^{j}_{k}[m^{j}]\bar{p}^{j}_{k}[m^{j},n^{j}],$ and
$\bar{V}^{j}_{k}[m^{j},n^{j}]=(1-(1+\bar{\gamma}^{j}_{k}[m^{j},n^{j}])^{-2}).$ Moreover, $\mathbf{\bar{p}}_{k}^{j}$ is the collection of optimization variables $\bar{p}_{k}^{}[m^{j},n^{j}], \forall m^{j}, n^{j}, \forall j.$

Optimization problem (\ref{Op2}) is still non-convex. However, its structure is more tractable compared to problem (\ref{Op1}). In the following, we find a low-complexity solution to problem (\ref{Op2}) using the difference of convex programming and SCA methods.   
\subsection{Difference of Convex Programming}
\textbf{Step 2:} The two remaining difficulties for solving problem (\ref{Op2}) are the binary variables in constraints $\mathrm{C6}$ and $\mathrm{C10}$ and the structure of the achievable rate for FBT in $\mathrm{C1}$ and $\mathrm{C2}$. To tackle these issues, we employ a difference of convex (DC) programming approach\cite{ghanem1,yan,kwan1,Joinoptimization}. To this end, the integer constraints  in (\ref{Op2}) are rewritten in the following difference of convex function form:
\begin{align}\label{eq1}
&\mathrm{C6a:} 0 \leq s^{u}_{k}[m^{u},n^{u}] \leq 1, \forall k,m^{u},n^{u}, \\&
\mathrm{C6b:}  E^{u}(\mathbf{s}^{{u}}) -H^{u}(\mathbf{s}^{{u}}) \leq 0,
\end{align}
\begin{align}\label{eq2}
&\mathrm{C10a:} 0 \leq s^{d}_{k}[m^{d},n^{d}] \leq 1, \forall k,m^{d},n^{d}, \\&
\mathrm{C10b:}  E^{d}(\mathbf{s}^{{d}}) -H^{d}(\mathbf{s}^{{d}}) \leq 0,
\end{align}
where\vspace{-0.55cm}
\begin{align}&\label{esu}
E^{j}(\mathbf{s}^{{j}})=\sum_{k=1}^{K}\sum_{m^{j}=1}^{M^{j}}\sum_{n^{j}=1}^{N^{j}}s^{j}_{k}[m^{j},n^{j}], \forall j, \\& H^{j}(\mathbf{s}^{{j}})=\sum_{k=1}^{K}\sum_{m^{j}=1}^{M^{j}}\sum_{n^{j}=1}^{N^{j}}(s^{j}_{k}[m^{j},n^{j}] )^{2}, \forall j.    
\end{align} 
Now, constraints $\mathrm{C6}$, $\mathrm{C10}$ have been rewritten in continuous form, cf. $\mathrm{C6a}$, $\mathrm{C10a}$ . However, constraints $\mathrm{C6b}$, $\mathrm{C10b}$ are non-convex, i.e., reverse convex constraints. In order to handle them, we introduce the following lemma.
\begin{lem}
	For sufficiently large constant values $\eta_{1}$ and $\eta_{2}$ the optimization problem in (\ref{Op2})  is equivalent to the following problem:
	\begin{align}\label{Op3}& \underset {\mathbf{s}^{u}, \mathbf{p}^{u},\mathbf{s}^{d}, \mathbf{p}^{d},\mathbf{\bar{p}}^{u}, \mathbf{\bar{p}}^{d}}{ \mathop {\mathrm {minimize}}\nolimits }~\Phi(\mathbf{\bar{p}}^{u}, \mathbf{\bar{p}}^{d})+\eta_{1}(E^{u}-H^{u})+\eta_{2}(E^{d}-H^{d})
		\\&\;\; \mathrm {s.t.}~ \mathrm{C1-C5, C6a, C7-C9, C10a, C11-C20},\nonumber
	\end{align} 
where $\Phi(\mathbf{\bar{p}}^{u}, \mathbf{\bar{p}}^{d})$ is the objective function of problem (\ref{Op2}).    
\end{lem}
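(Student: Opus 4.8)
The plan is to establish the claim as an \emph{exact penalty} result for the two reverse-convex constraints $\mathrm{C6b}$ and $\mathrm{C10b}$. The starting observation is that, on the box constraints $\mathrm{C6a}$ and $\mathrm{C10a}$, the penalty terms are nonnegative: since $E^{j}(\mathbf{s}^{j})-H^{j}(\mathbf{s}^{j})=\sum_{k,m^{j},n^{j}} s^{j}_{k}[m^{j},n^{j}]\big(1-s^{j}_{k}[m^{j},n^{j}]\big)$ and each factor $s(1-s)\ge 0$ whenever $0\le s\le 1$, we have $E^{j}-H^{j}\ge 0$, with equality \emph{if and only if} every $s^{j}_{k}[m^{j},n^{j}]\in\{0,1\}$. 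Hence the feasible set of (\ref{Op2}) coincides exactly with the zero level set of the penalty inside the relaxed feasible set described by $\mathrm{C1}$--$\mathrm{C5}$, $\mathrm{C6a}$, $\mathrm{C7}$--$\mathrm{C9}$, $\mathrm{C10a}$, $\mathrm{C11}$--$\mathrm{C20}$.

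Next I would prove the two inequalities relating the optimal values. For the easy direction, note that (\ref{Op3}) merely relaxes the integrality of (\ref{Op2}) to the box constraints, so every feasible point of (\ref{Op2}) is feasible for (\ref{Op3}); moreover at any such point the penalty vanishes, so the two objectives coincide there. This already shows that the optimal value of (\ref{Op3}) is no larger than that of (\ref{Op2}) for any $\eta_{1},\eta_{2}\ge 0$. The nontrivial direction is to show that, for sufficiently large $\eta_{1},\eta_{2}$, every optimizer of (\ref{Op3}) must drive both penalty terms to zero, so that the $\mathbf{s}$ variables are binary, the point is feasible for (\ref{Op2}), and it is optimal for it.

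For that reverse direction I would invoke a compactness/exactness argument. The relaxed feasible set is closed, being a finite intersection of closed sets, and bounded, since $\mathrm{C6a}$, $\mathrm{C10a}$ bound $\mathbf{s}$ while $\mathrm{C7}$, $\mathrm{C11}$ together with $\mathrm{C13}$--$\mathrm{C20}$ bound the powers; it is therefore compact, and $\Phi$ is continuous and nonnegative on it. Assuming (\ref{Op2}) is feasible, its zero-penalty subset is nonempty, and a standard exact-penalty theorem for DC/reverse-convex programs, as used in \cite{yan,ghanem1}, then guarantees finite thresholds $\eta_{1}^{\star},\eta_{2}^{\star}$ such that for $\eta_{1}\ge\eta_{1}^{\star}$ and $\eta_{2}\ge\eta_{2}^{\star}$ every global minimizer of (\ref{Op3}) satisfies $E^{u}-H^{u}=E^{d}-H^{d}=0$. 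Combining the two directions yields equality of the optimal values and a one-to-one correspondence of the optimal solutions, which is the asserted equivalence.

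The hard part will be the exactness claim itself, namely proving that a \emph{finite} penalty weight suffices rather than requiring $\eta_{1},\eta_{2}\to\infty$. The intuition I would make precise is that, by compactness and continuity, pushing the penalty from any small positive value down to zero can change the attainable objective $\Phi$ by only a bounded amount, whereas the penalty contribution scales linearly in $\eta_{j}$; once $\eta_{j}$ exceeds this bounded marginal objective gain, keeping any $s^{j}_{k}[m^{j},n^{j}]$ strictly fractional can never be optimal. Care is needed because the relaxed feasible set is nonconvex (constraints $\mathrm{C1}$ and $\mathrm{C2}$ are not convex), so the argument must rest solely on compactness and continuity and not exploit convexity of the feasible region.
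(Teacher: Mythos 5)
Your route is genuinely different from the paper's. The paper casts $\eta_{1},\eta_{2}$ as Lagrange multipliers for the reverse-convex constraints $\mathrm{C6b}$, $\mathrm{C10b}$, writes (\ref{Op3}) as the inner minimization of the Lagrangian, invokes weak duality to get $U_{d}^{*}\le U^{*}$, argues strong duality by a two-case analysis (if $E^{j}-H^{j}>0$ at the dual optimum the multipliers blow up, contradicting the finite upper bound), and finally uses monotonicity of the dual function in $\eta_{1},\eta_{2}$ to conclude that any sufficiently large multipliers are optimal. You instead argue a direct exact-penalty sandwich: the zero level set of the penalty inside the relaxed feasible set is exactly the feasible set of (\ref{Op2}) (your identity $E^{j}-H^{j}=\sum s(1-s)\ge 0$ with equality iff binary is correct and is the same nonnegativity observation the paper uses), the relaxation direction gives one inequality for free, and exactness of the penalty for finite $\eta_{1},\eta_{2}$ gives the other. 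The two framings are two views of penalization of a single inequality constraint; yours is arguably cleaner to state, while the paper's duality language lets it reuse the weak-duality inequality off the shelf.

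However, both proofs stand or fall on the same crux, and your justification of it has a genuine gap. The claim that \emph{finite} $\eta_{1},\eta_{2}$ suffice does not follow from compactness and continuity alone: those give uniform continuity of the value function $\epsilon\mapsto\inf\{\Phi: E^{j}-H^{j}\le\epsilon\}$, not the Lipschitz-type (calmness) bound your ``penalty grows linearly while the objective gain is bounded'' argument needs. A one-variable illustration: minimizing $-\sqrt{s(1-s)}+\eta\, s(1-s)$ over $s\in[0,1]$ has a strictly negative optimum at a fractional $s$ for \emph{every} finite $\eta$, even though the domain is compact and the objective continuous, because the objective improves like $\sqrt{\epsilon}$ while the penalty only charges $\eta\epsilon$. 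So the ``standard exact-penalty theorem'' you invoke requires hypotheses (polyhedrality of the feasible set, or an error bound for the penalty function) that are not verified here and are not obviously satisfied, precisely because $\mathrm{C1}$ and $\mathrm{C2}$ are nonconvex --- the caveat you yourself raise at the end is in fact the unproved step, not a side remark. To be fair, the paper's proof hides the identical difficulty: monotonicity of $\Theta(\eta_{1},\eta_{2})$ bounded above by $U^{*}$ only yields convergence to some limit $\le U^{*}$ as $\eta_{1},\eta_{2}\to\infty$, not attainment of $U^{*}$ at finite multipliers, and its Case~1/Case~2 dichotomy does not rule out the supremum being approached only asymptotically. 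Your write-up would be complete if you either verified an error-bound condition for $\sum s(1-s)$ over the actual feasible set or restricted the exactness claim to the structure of this specific problem (e.g., exploiting that $\Phi$ and all constraints are affine in $\mathbf{s}$ for fixed powers).
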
           
\begin{proof}
Please refer to Appendix A.
\end{proof}
The only remaining sources of non-convexity are the structure of the achievable rate for FBT and the non-convex objective function. In the following, we employ SCA to approximate problem (\ref{Op3}) by a convex problem. Subsequently, we propose an iterative algorithm to find a low-complexity solution to problem (\ref{Op3}).      
\subsection{Successive Convex Approximation}
\textbf{Step 3:} In order to cope with the remaining non-convexity of  (\ref{Op3}), we employ the Taylor series approximation to approximate the non-convex parts of the objective function and constraints $\mathrm{C1}$ and $\mathrm{C2}$. Since $H^{j}(\mathbf{s}^{{j}}),\forall j,$ and $-\bar{V}^{j}_{k}(\mathbf{\bar{p}}^{j}_{k}),\forall j,$ are differentiable convex functions, then for any feasible points $\mathbf{s}^{{j(i)}}, \mathbf{\bar{p}}^{j(i)}_{k}, \forall j$, the following inequalities hold:          
\begin{align}&\label{htayslorhu}\hspace{-1.2cm}
H^{j}(\mathbf{s}^{{j}}) \ge\bar{H}^{j}(\mathbf{s}^{{j}})= H^{j}(\mathbf{s}^{{j(i)}}) \nonumber \\& +\nabla _{\mathbf {s}^{j}} H^{j}(\mathbf {s}^{j(i)})^{T}(\mathbf{s}^{{j}}-\mathbf{s}^{{j(i)}}), \forall j, 
\end{align}        
and \vspace{-0.25cm}
\begin{align}\label{vua}
\bar{V}^{j}_{k}(\mathbf{\bar{p}}^{j}_{k}) \leq \tilde{V}^{j}_{k}(\mathbf{\bar{p}}^{j}_{k},\mathbf{\bar{p}}^{j(i)}_{k}) =  \bar{V}^{j}_{k}(\mathbf{\bar{p}}^{j(i)}_{k})\nonumber\\&\hspace{-3.2cm}+ \nabla_{\mathbf{\bar{p}}^{j}_{k}}{\bar{V}}_{k}(\mathbf{\bar{p}}^{j(i)}_{k})^{T}(\mathbf{\bar{p}}^{j}_{k}-\mathbf{\bar{p}}^{j(i)}_{k}), \forall j.
\end{align}
The right hand sides of (\ref{htayslorhu}) and (\ref{vua}) are affine functions representing the global underestimation of $H^{j}(\mathbf{s}^{{j}}), \forall j$, and $\bar{V}^{j}_{k}(\mathbf{\bar{p}}^{j}_{k}),\forall j$, respectively, where $\nabla _{\mathbf {s}^{j}} H^{j}(\mathbf {s}^{j(i)})^{T}(\mathbf{s}^{{j}}-\mathbf{s}^{{j(i)}})$ and $\nabla_{\mathbf{\bar{p}}^{j}_{k}}{\bar{V}}^{j}_{k}(\mathbf{\bar{p}}_{k}^{j({i})})$ are given on the top of the next page.  
\begin{figure*}[t!]\vspace{-1.5cm}
	 \centering
%		\onecolumn 			
\begin{align}&
\nabla_{\mathbf{s}^{j}}H^{j}(\mathbf{s}^{j({i})})^{T}(\mathbf{s}^{j}-\mathbf{s}^{j({i})})  =\sum_{k=1}^{K}\sum_{m^{j}=1}^{M^{j}}\sum_{n^{j}=1}^{N^{j}}2s^{j({i})}_{k}[m^{j},n^{j}]\left( s^{j}_{k}[m^{j},n^{j}]-s^{j(i)}_{k}[m^{j},n^{j}]\right),\forall j,\\&\vspace{-0.5cm} \nabla_{\mathbf{\bar{p}}_{k}^{j}}\bar{V}^{j}_{k}(\mathbf{\bar{p}}_{k}^{j({i})})
= \frac{aQ^{-1}(\epsilon^{j}_{k})}{\sqrt{\sum_{m^{j}=1}^{M^{j}}\sum_{n^{j}=1}^{N^{j}} \bar{V}^{j(i)}_{k}[m^{j},n^{j}]}}\begin{pmatrix} 
\ \frac{g^{j}_{k}[1]}{(1+\bar{p}^{j(i)}_{k}[1,1]g^{j}_{k}[1])^{3}} \\
\vdots \\
\frac{g_{k}^{j}[M]}{(1+\bar{p}_{k}^{j(i)}[M,N]g_{k}^{j}[M])^{3}} \end{pmatrix}, \forall j.
\end{align}
\noindent \rule{18.5cm}{0.5pt}
% \twocolumn
\end{figure*}    
 By substituting the right hand sides of (\ref{htayslorhu}) and (\ref{vua}) into (\ref{Op3}), we obtain the following optimization problem:    
\begin{align}&\label{op3a}
\underset {\mathbf{s}^{u}, \mathbf{p}^{u},\mathbf{s}^{d}, \mathbf{p}^{d},\mathbf{\bar{p}}^{u}, \mathbf{\bar{p}}^{d}}{ \mathop {\mathrm {minimize}}\nolimits }~\Phi(\bold{\bar{p}}^{u}, \bold{\bar{p}}^{d})+\eta_{1}(E^{u}-\bar{H}^{u})+\eta_{2}(E^{d}-\bar{H}^{d})	
\\&\;\; \mathrm {s.t.}~\mathrm{C1:}	C^{u}_{k}(\mathbf{\bar{p}}_{k}^{u})-\tilde{V}^{u}_{k}(\mathbf{\bar{p}}_{k}^{u}, \mathbf{\bar{p}}^{u(i)}_{k})\geq B_{k},\forall k, \nonumber\\& \;\qquad \mathrm{C2:}		C^{d}_{k}(\mathbf{\bar{p}}_{k}^{d})-\tilde{V}^{d}_{k}(\mathbf{\bar{p}}_{k}^{d}, \mathbf{\bar{p}}^{d(i)}_{k})\geq \Gamma_{k}B_{k} \nonumber,\forall k, \\& \;\qquad \mathrm{C3-C5, C6a, C7-C9, C10a, C11-C20}.\nonumber
\end{align}  
Optimization problem (\ref{op3a}) is convex because the objective function is convex and the constraints span a convex set. Therefore, it can be efficiently solved by standard convex optimization solvers such as CVX \cite{cvx}. Algorithm 1 summarizes the main steps to solve (\ref{Op3}) in an iterative manner, where the
solution of (\ref{op3a}) in iteration ($i$) is used as the initial point for
the next iteration $(i+1)$. The algorithm produces a sequence of
improved feasible solutions until convergence to a local optimum
point of problem (\ref{Op3}) or equivalently problem (\ref{Op1}) in polynomial time.
 \begin{algorithm}[t]
	\caption{Successive Convex Approximation}
	1: {Initialize:} Random initial points $\mathbf{s}^{u(1)}$, $\mathbf{s}^{d(1)}$, $\mathbf{\bar{p}}^{u(1)}$, $\mathbf{\bar{p}}^{d(1)}$, set iteration index $i=1$, maximum number of iterations $I_{\text{max}}$, and initial penalty factors, $\eta_{1} >0$ and $\eta_{2} >0$.\\
	2: \textbf{Repeat}\\
	3: Solve convex problem (\ref{op3a}) for given  $\mathbf{s}^{u(i)}$, $\mathbf{s}^{d(i)}$, $\mathbf{\bar{p}}^{u(i)}$, $\mathbf{\bar{p}}^{d(i)}$, and store the intermediate solutions   $\mathbf{s}^{u}$, $\mathbf{s}^{d}$, $\mathbf{\bar{p}}^{u}$, $\mathbf{\bar{p}}^{d}$\\
	4: Set ${i}={i}+1$ and update $\mathbf{s}^{u(i)}=\mathbf{s}^{u}$, $\mathbf{s}^{d(i)}=\mathbf{s}^{d}$,  $\mathbf{\bar{p}}^{u(i)}=\mathbf{\bar{p}}^{u}$,
	$\mathbf{\bar{p}}^{d(i)}=\mathbf{\bar{p}}^{d}$. \\
	6: \textbf{Until} convergence or $i=I_{\text{max}}$.\\
	7: {Output:} $\mathbf{s}^{u*}=\mathbf{s}^{u}$,
	$\mathbf{s}^{d*}=\mathbf{s}^{d}$,
	$\mathbf{\bar{p}}^{u*}=\mathbf{\bar{p}}^{u}$,
	$\mathbf{\bar{p}}^{d*}=\mathbf{\bar{p}}^{d}$.
	\label{sca}
\end{algorithm}  
\section{Performance Evaluation}
In this section, we provide simulation results to evaluate the effectiveness of the proposed joint uplink-downlink resource allocation algorithm for OFDMA-URLLC MEC systems. We adopt the simulation parameters given in Table I, unless specified otherwise. In our simulations, a single cell is considered with inner and outer radii $r_{1}=50~\textrm{m}$ and $r_{2}=100~\textrm{m}$, respectively. The BS is located at the center of the cell, and the users are randomly located between the inner and the outer radii. The user weights are set to  $w_{k}=1,\forall k$ for simplicity. The path loss is calculated as  $35.3 + 37.6 \log_{10}(d_{k})$\cite{chsecross}, where $d_{k}$ is the distance from the BS to user $k$. The values of the penalty factors are set to $\eta_{1}=10KP_{k, \text{max}}$  and $\eta_{2}=10P_{\text{max}}$. The small scale fading gains between the BS and the users are modeled as independent and identically Rayleigh distributed. All simulation results are averaged over $100$ realizations of the path loss and multipath fading.
\begin{table}[t]
	\label{tab:table}
	\centering
	\caption{Simulation Parameters.} 
	\renewcommand{\arraystretch}{1.4}
	\scalebox{0.5}{%
		\begin{tabular}{|c||c|}
			\hline
			Parameter & Value \\ \hline \hline 
			Total number  of sub-carriers in uplink and downlink $M=M^{u}=M^{d}$ & $2M$=64 \\ \hline
			Number  of time slots in uplink and downlink $N^{u}=N^{d}$ & 4 \\ \hline
		    Bandwidth of each sub-carrier & 30 kHz \\ \hline
			Noise power density  & -174 dBm/Hz \\ \hline
			Maximum BS transmit power, $P_{\text{max}}$  &  $45$~dBm \\ \hline  
		Maximum transmitted power of each user, $P_{k, \text{max}}$  &  $23$~dBm \\ \hline  
			Value of $\Gamma_{k},\forall k$ &  1						\\ \hline 			
%			Required CPU cycles for processing one bit of 
%			information &  $[330]$ cycles/bit \cite{miettinen2010energy}						\\ \hline 
%			Local user's CPU Speed &  $2.7$ GHz \\ \hline 
	\end{tabular}}
\end{table} 
\subsection{Performance Bound and Benchmark Scheme}
We compare the performance of the proposed resource allocation algorithm with the following benchmark schemes:
\begin{itemize}
	\setlength{\itemsep}{1pt}
\item {\textbf{Shannon's capacity (SC)}}: To obtain an (unachievable) lower bound on the total network power consumption, Shannon's capacity formula is adopted in problem (\ref{Op1}), i.e., $V^{j}_{k}(\mathbf{s}_{k}^{j},\mathbf{p}_{k}^{j}), \forall j,$ is set to zero in constraints $\mbox {C1}$ and $\mbox {C2}$, respectively, and all other constraints are retained. The resulting optimization problem is solved using a modified version of the proposed algorithm. 
		\setlength{\itemsep}{1pt}
	\item {\textbf{Fixed sub-carrier assignment (FSA)}}: In this scheme, we fix the sub-carrier assignment. In fact, we divide the total number of sub-carriers among the users such that their delay and causality constraints are met. Then, we  optimize the power allocated to the sub-carriers for the given channel realization. The resulting optimization problem is solved using the SCA method.
\end{itemize}
\subsection{Simulation Results}
In Fig.~\ref{bits}, we investigate the average system power consumption versus the size of the task of the URLLC users and study the impact of different delay requirements. For delay scenario $S_{0}$, none of the users has delay restrictions, i.e., $D_{k}=\tau+N^{d}= 7, \forall k$. In contrast, for delay scenario $S_{1}$, two users have strict delay constraints while the remaining users do not, i.e., $D_{1} = D_{2}= 5$ and $D_{3} = D_{4} = 7$. As expected, increasing the required number of transmitted bits leads to higher transmit powers. This is due to the fact that if more bits are to be transmitted in a given frame, higher SNRs are needed, and thus, the BS and the users have to increase the transmitted power. Furthermore, the proposed scheme leads to a substantially lower power consumption compared to the FSA scheme. This is due to the non-optimal sub-carrier allocation for the FSA scheme. Fig.~\ref{bits} also reveals the impact of strict delay requirements. In particular, delay scenario $S_{1}$ leads to a higher power consumption compared to $S_{0}$ because the BS and the users are forced to allocate more power even if their channel conditions are poor to ensure their transmissions are completed with the desired delay. Furthermore, SC provides a lower bound for the required power consumption of OFDMA-URLLC MEC systems. However, SC cannot guarantee the required latency and reliability. This is due to the fact that, in this scheme, the performance loss incurred by FBT is not taken into account for resource allocation design, and thus the obtained resource allocation policies may not
meet the QoS constraints.

In Fig.~\ref{Pe}, we show the average system power consumption versus the packet error probability and study the impact of different delay requirements. As can be observed, for the proposed scheme and FSA, the average system power consumption is a monotonically decreasing function of the packet error probability. This is due to the fact that the complementary error function in the normal approximation is a monotonically decreasing function of $\epsilon$, and as a result, the impact of the dispersion part in the normal
approximation decreases as $\epsilon$ increases. Fig.~\ref{Pe} also reveals the impact of delay constraints. In particular, delay scenario $\bar{S}_{1}=\{D_{1}=D_{2}=D_{3}=5$, $D_{4}=D_{5}=7\}$ leads to a higher power consumption compared to $\bar{S}_{0}=\{D_{k}=7, \forall k\}$. This is due to the smaller feasible set of the optimization problem. Moreover, as can be seen, for SC, the power consumption is independent of the packet error probability. This is due to the fact that SC assumes that the decoding error probability is zero. Moreover, the gap between the proposed scheme and SC is the price to be paid for enforcing strict delay and reliability requirements to ensure URLLC.
\begin{figure}[t!]
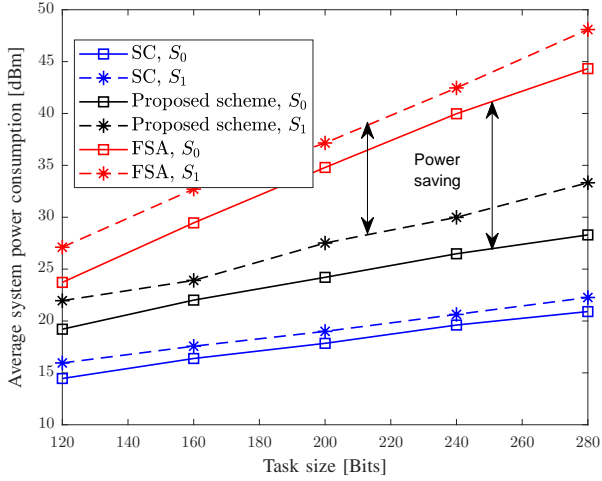

	\centering
	\resizebox{0.99\linewidth}{!}{\psfragfig{bits1}}\vspace{-4mm}
	\caption{Average consumed power [dBm] vs. task size [bits], $K=4$, $ \tau=3, \bar{O}=1$, $\epsilon^{j}_{k}=10^{-6}, \forall j, k.$} 
	\label{bits}\vspace{-0.5cm}
\end{figure}
\begin{figure}[t!]
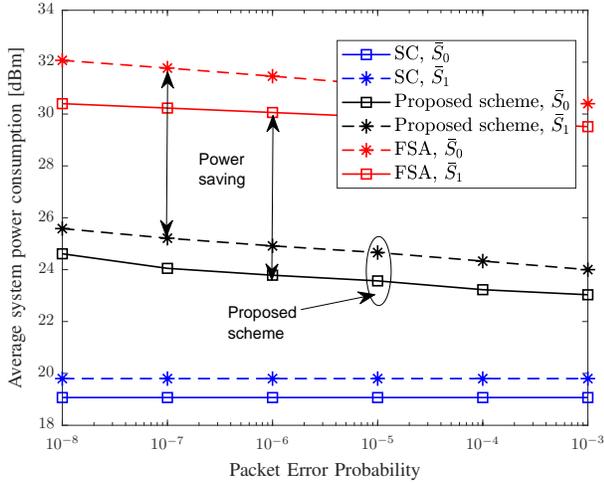

	\centering
	\resizebox{0.99\linewidth}{!}{\psfragfig{Pe2f}}\vspace{-4mm}
\caption{Average consumed power [dBm] vs. packet error probability, $K=5$, $ \tau=3, \bar{O}=1$, $B_{k}=160~\text{bits}, \forall k$.} 
	\label{Pe}
\end{figure}

\section{Conclusions}
This paper studied the resource allocation algorithm design for OFDMA-URLLC MEC systems. To ensure the stringent end-to-end transmission delay and reliability requirements of URLLC, we proposed a joint uplink-downlink resource allocation scheme which takes into account FBT. Moreover, to minimize the end-to-end delay, we proposed a partial time overlap between the uplink and downlink frames which introduces new uplink-downlink causality constraints. The proposed resource allocation algorithm design was formulated as an optimization problem for minimization of the total weighted transmit power of the network under QoS constraints regarding the minimum required number of computed bits of the URLLC users within a maximum computing time, i.e., the end-to-end delay. Due to the non-convexity of the formulated problem, finding a global solution entails a prohibitive computational complexity. Thus, a low-complexity algorithm based on SCA was proposed to find a high-quality sub-optimal solution. Our simulation results showed that the proposed resource allocation algorithm design facilitates the application of URLLC in MEC systems, and achieves significant power savings compared to a benchmark scheme.
	\section*{Appendix A}
The proof follows similar steps as corresponding proofs in \cite{ghanem1, yan,kwan1}. In the following, we show that problems (\ref{Op3}) and (\ref{Op2}) are equivalent. Let $U^{*}$ denote the optimal objective value of (\ref{Op3}). We define the Lagrangian function, denoted by ${\mathcal {L}}(\bar{\mathbf {p}}^{u},\bar{\mathbf {p}}^{d},\mathbf {s}^{u},\mathbf {s}^{d},\eta_{1},\eta_{2})$, as \cite{Boyed}
\begin{align}&\label{La} {\mathcal {L}}(\bar{\mathbf {p}}^{u},\bar{\mathbf {p}}^{d},\mathbf {s}^{u},\mathbf {s}^{d},\eta_{1},\eta_{2})=\nonumber\\& \hspace{1cm}\Phi(\mathbf{\bar{p}}^{u}, \mathbf{\bar{p}}^{d})+\eta_{1}(E^{u}-H^{u})+\eta_{2}(E^{d}-H^{d}),
\end{align}
where $\eta_{1}$ and $\eta_{2}$ are the Lagrange multipliers corresponding to constraints $\mbox {C6b}$ and $\mbox {C10b}$, respectively.  
Note that $E^{u}(\mathbf {s}^{u})-H^{u}(\mathbf {s}^{u}) \geq 0$ and $E^{d}(\mathbf {s}^{d})-H^{d}(\mathbf {s}^{d}) \geq 0$ hold.
Using Lagrange duality \cite{Boyed,Joinoptimization,ghanem1}, we have the following relation \footnote{Note that weak duality holds for convex and non-convex optimization problems\cite{Boyed}.}
\begin{align}\label{eq28} &U_{d}^{*}=\underset {\eta_{1}, \eta_{2} \ge 0}{ \mathop {\mathrm {max}}\nolimits } \quad \underset {\mathbf {p}^{u}, \mathbf {p}^{d},\mathbf {s}^{u},\mathbf {s}^{d}, \bar {\mathbf {p}}^{u},\bar{\mathbf {p}}^{d} \in \boldsymbol{\Omega }}{ \mathop {\mathrm {min}}\nolimits } \quad {\mathcal {L}}(\bar{\mathbf {p}}^{u},\bar{\mathbf {p}}^{d},\mathbf {s}^{u},\mathbf {s}^{d},\eta_{1},\eta_{2}) \IEEEyesnumber \IEEEyessubnumber
	\\\overset {(a)}{\le }&\label{eq29}\underset {\mathbf {p}^{u}, \mathbf {p}^{d},\mathbf {s}^{u},\mathbf {s}^{d}, \bar {\mathbf {p}}^{u},\bar{\mathbf {p}}^{d} \in \boldsymbol{\Omega }}{ \mathop {\mathrm {min}}\nolimits } \quad \underset {\eta_{1},\eta_{2} \ge 0}{ \mathop {\mathrm {max}}\nolimits } \quad {\mathcal {L}}(\bar{\mathbf {p}}^{u},\bar{\mathbf {p}}^{d},\mathbf {s}^{u},\mathbf {s}^{d},\eta_{1},\eta_{2}) = U^{*},\IEEEyessubnumber
\end{align}
where $\boldsymbol{\Omega }$ is the feasible set specified by the constraints in (\ref{Op3}).  
In the following, we first prove the strong duality, i.e., $U_{d}^{*}=U^{*}$. 
Let $(\mathbf {p}^{u*}, \mathbf {p}^{d*},\mathbf {s}^{u*},\mathbf {s}^{d*}, \bar {\mathbf {p}}^{u*},\bar{\mathbf {p}}^{d*},\eta_{1}^{*},\eta_{2}^{*})$ denotes the solution of (\ref{eq28}). For this solution, the following two cases are possible. \textit{Case 1)} If $E^{u}(\mathbf {s}^{u})-H^{u}(\mathbf {s}^{u})>0$ and $E^{d}(\mathbf {s}^{d})-H^{d}(\mathbf {s}^{d})>0$ hold, the optimal $\eta_{1}^{*}$ and $\eta_{2}^{*}$ are infinite, respectively. Hence, $U_{d}^{*}$ is infinite too, which contradicts the fact that it is upper bounded by a finite-value $U^{*}$.  \textit{Case 2)} If $E^{u}(\mathbf {s}^{u})-H^{u}(\mathbf {s}^{u})=0$ and $E^{d}(\mathbf {s}^{d})-H^{d}(\mathbf {s}^{d})=0$ hold, then $(\mathbf {p}^{u*}, \mathbf {p}^{d*},\mathbf {s}^{u*},\mathbf {s}^{d*}, \bar {\mathbf {p}}^{u*},\bar{\mathbf {p}}^{d*})$ belongs to the feasible set of the original problem (\ref{Op2}) which implies $U_{d}^{*}=U^{*}$. Hence, strong duality holds, and we can focus on solving the dual problem (\ref{eq28}) instead of the primal problem (\ref{eq29}). 

Next, we show that any $\eta_{1} \geq \eta_{1,0}$ and $\eta_{1} \geq \eta_{1,0}$ are optimal solutions for dual problem (\ref{eq28}), i.e., $\eta_{1}^{*}$ and $\eta_{2}^{*}$, where $\eta_{1,0}$ and $\eta_{2,0}$ are some sufficiently large numbers. To do so, we show that ${\Theta}(\eta_{1},\eta_{2})\triangleq \underset {\bar{\mathbf {p}}^{u},\bar{\mathbf {p}}^{d},{\mathbf {p}}^{u},{\mathbf {p}}^{u},{\mathbf {s}}^{u},{\mathbf {s}}^{d}\in \boldsymbol{\Omega }}{ \mathop {\mathrm {min}}\nolimits} \quad {\mathcal {L}}(\bar{\mathbf {p}}^{u},\bar{\mathbf {p}}^{d},\mathbf {s}^{u},\mathbf {s}^{d},\eta_{1},\eta_{2})$ is a monotonically increasing function of $\eta_{1}$ and $\eta_{2}$. Recall that $E^{u}(\mathbf {s}^{u})-H^{u}(\mathbf {s}^{u}) \geq 0$ and $E^{d}(\mathbf {s}^{d})-H^{d}(\mathbf {s}^{d}) \geq 0$ holds for any given ${\mathbf {p}}^{u},{\mathbf {p}}^{d},{\mathbf {s}}^{u},{\mathbf {s}}^{d},\bar{\mathbf {p}}^{u},\bar{\mathbf {p}}^{d}\in \boldsymbol{\Omega }$. Therefore, ${\mathcal {L}}(\bar{\mathbf {p}}^{u},\bar{\mathbf {p}}^{d},\mathbf {s}^{u},\mathbf {s}^{d},\eta_{1}(1),\eta_{2}(1)) \leq {\mathcal {L}}(\bar{\mathbf {p}}^{u},\bar{\mathbf {p}}^{d},\mathbf {s}^{u},\mathbf {s}^{d},\eta_{1}(2),\eta_{2}(2))$ holds for any given $\bar{\mathbf {p}}^{u},\bar{\mathbf {p}}^{d},{\mathbf {p}}^{u},{\mathbf {p}}^{u},{\mathbf {s}}^{u},{\mathbf {s}}^{d}\in \boldsymbol{\Omega }$, $0 \leq \eta_{1}(1)\leq \eta_{1}(2)$, and $0 \leq \eta_{2}(1)\leq \eta_{2}(2)$. This implies that ${\Theta}(\eta_{1}(1),\eta_{2}(1))\leq {\Theta}(\eta_{1}(2),\eta_{2}(2))$   and that ${\Theta}(\eta_{1},\eta_{2})$ is monotonically increasing in $\eta_{1}$ and $\eta_{2}$. Using this result, we can conclude that  
$\Theta (\eta_{1},\eta_{2}) =U^{*},\, \forall \eta_{1} \ge  \eta_{1,0}$, $\eta_{2} \ge  {\eta}_{2,0}$.

In summary, due to strong duality, we can use the dual problem  (\ref{Op3}) to find the solution of the primal problem (\ref{Op2}) and any $\eta_{1} \geq \eta_{1,0}$ and $\eta_{1} \geq \eta_{1,0}$ are optimal dual variables. These results are concisely given in Lemma 1 which concludes the proof.
\bibliography{ref}  

% Generated by IEEEtran.bst, version: 1.14 (2015/08/26)
\begin{thebibliography}{10}
\providecommand{\url}[1]{#1}
\csname url@samestyle\endcsname
\providecommand{\newblock}{\relax}
\providecommand{\bibinfo}[2]{#2}
\providecommand{\BIBentrySTDinterwordspacing}{\spaceskip=0pt\relax}
\providecommand{\BIBentryALTinterwordstretchfactor}{4}
\providecommand{\BIBentryALTinterwordspacing}{\spaceskip=\fontdimen2\font plus
\BIBentryALTinterwordstretchfactor\fontdimen3\font minus
  \fontdimen4\font\relax}
\providecommand{\BIBforeignlanguage}[2]{{%
\expandafter\ifx\csname l@#1\endcsname\relax
\typeout{** WARNING: IEEEtran.bst: No hyphenation pattern has been}%
\typeout{** loaded for the language `#1'. Using the pattern for}%
\typeout{** the default language instead.}%
\else
\language=\csname l@#1\endcsname
\fi
#2}}
\providecommand{\BIBdecl}{\relax}
\BIBdecl

\bibitem{popski}
P.~Popovski, ``Ultra-reliable communication in {5G} wireless systems,'' in
  \emph{Proc. IEEE Int. Conf. 5G Ubiq. Connect}, Nov 2014, pp. 146--151.

\bibitem{optimal}
Y.~Hu, M.~Ozmen, M.~C. Gursoy, and A.~Schmeink, ``Optimal power allocation for
  {QoS}-constrained downlink multi-user networks in the finite blocklength
  regime,'' \emph{{IEEE} Trans. Wireless Commun}, vol.~17, no.~9, pp.
  5827--5840, Sept 2018.

\bibitem{convexfinite}
S.~Xu, T.~H. Chang, S.~C. Lin, C.~Shen, and G.~Zhu, ``Energy-efficient packet
  scheduling with finite blocklength codes: convexity analysis and efficient
  algorithms,'' \emph{{IEEE} Trans. Wireless Commun}, vol.~15, no.~8, pp.
  5527--5540, Aug 2016.

\bibitem{chsejoint}
C.~She, C.~Yang, and T.~Q.~S. Quek, ``Joint uplink and downlink resource
  configuration for ultra-reliable and low-latency communications,''
  \emph{{IEEE} Trans. Commun}, vol.~66, no.~5, pp. 2266--2280, May 2018.

\bibitem{ghanem1}
W.~Ghanem, V.~Jamali, Y.~Sun, and R.~Schober, ``Resource allocation for
  multi-user downlink {URLLC-OFDMA} systems,'' in \emph{Proc. {IEEE} {Int}.
  {C}ommun. {C}onf.}, Shanghai, P.R. China, May 2019.

\bibitem{gha}
W.~R. Ghanem, V.~Jamali, Y.~Sun, and R.~Schober, ``Resource allocation for
  multi-user downlink {MISO OFDMA-URLLC} systems,'' 2019, Submitted to IEEE
  Trans. Commun., https://arxiv.org/abs/1910.06127.

\bibitem{energymec}
Z.~{Yang}, C.~{Pan}, J.~{Hou}, and M.~{Shikh-Bahaei}, ``Efficient resource
  allocation for mobile-edge computing networks with {NOMA:} completion time
  and energy minimization,'' \emph{IEEE Trans. Commun.}, vol.~67, no.~11, pp.
  7771--7784, Nov 2019.

\bibitem{Zhoumec}
F.~{Zhou} and R.~Q. {Hu}, ``Computation efficiency maximization in
  wireless-powered mobile edge computing networks,'' \emph{IEEE Trans. Wirel.
  Commun.}, pp. 1--1, Early access, 2020.

\bibitem{Polyanskiy}
Y.~Polyanskiy, H.~V. Poor, and S.~Verdu, ``Channel coding rate in the finite
  blocklength regime,'' \emph{IEEE Trans. Inf. Theory}, vol.~56, no.~5, pp.
  2307--2359, May 2010.

\bibitem{9048917}
M.~{Salmani} and T.~N. {Davidson}, ``On multi-user binary computation
  offloading in the finite-block-length regime,'' in \emph{Proc. 53rd Asilomar
  Conf. Signals, Systems, and Computers}, 2019, pp. 378--382.

\bibitem{DeepMEC}
R.~{Dong}, C.~{She}, W.~{Hardjawana}, Y.~{Li}, and B.~{Vucetic}, ``Deep
  learning for hybrid {5G} services in mobile edge computing systems: Learn
  from a digital twin,'' \emph{IEEE Trans. Wirel. Commun.}, vol.~18, no.~10,
  pp. 4692--4707, Oct 2019.

\bibitem{uavmec}
Y.~{Zhou}, C.~{Pan}, P.~L. {Yeoh}, K.~{Wang}, M.~{Elkashlan}, B.~{Vucetic}, and
  Y.~{Li}, ``Secure communications for {UAV}-enabled mobile edge computing
  systems,'' \emph{IEEE Trans. Commun}, vol.~68, no.~1, pp. 376--388, Jan 2020.

\bibitem{shannon}
C.~E. Shannon, ``A mathematical theory of communication,'' \emph{Bell Syst.
  Tech. J}, vol.~56, no.~5, pp. 2307--2359, May 2010.

\bibitem{thesis}
Y.~Polyanskiy, ``Channel coding: {N}on-asymptotic fundamental limits,'' Ph.D.
  dissertation, Princeton University.

\bibitem{Erseghe1}
T.~Erseghe, ``Coding in the finite-blocklength regime: {B}ounds based on
  {L}aplace integrals and their asymptotic approximations,'' \emph{IEEE Trans.
  Inf. Theory}, vol.~62, no.~12, pp. 6854--6883, Dec 2016.

\bibitem{jointsubchannel}
W.~{Wen}, Y.~{Fu}, T.~Q.~S. {Quek}, F.~{Zheng}, and S.~{Jin}, ``Joint
  uplink/downlink sub-channel, bit and time allocation for multi-access edge
  computing,'' \emph{IEEE Commun. Lett.}, vol.~23, no.~10, pp. 1811--1815, Oct
  2019.

\bibitem{yan}
Y.~Sun, D.~W.~K. Ng, Z.~Ding, and R.~Schober, ``Optimal joint power and
  subcarrier allocation for full-duplex multicarrier non-orthogonal multiple
  access systems,'' \emph{{IEEE} Trans. Commun}, vol.~65, no.~3, pp.
  1077--1091, March 2017.

\bibitem{Leemixed}
J.~Lee and S.~Leyffer, \emph{Mixed Integer Nonlinear Programming}.\hskip 1em
  plus 0.5em minus 0.4em\relax Springer Publishing Company, Incorporated, 2011.

\bibitem{kwan1}
D.~W.~K. Ng, Y.~Wu, and R.~Schober, ``Power efficient resource allocation for
  full-duplex radio distributed antenna networks,'' \emph{{IEEE} Trans.
  Wireless Commun}, vol.~15, no.~4, pp. 2896--2911, April 2016.

\bibitem{Joinoptimization}
E.~Che, H.~D. Tuan, and H.~H. Nguyen, ``Joint optimization of cooperative
  beamforming and relay assignment in multi-user wireless relay networks,''
  \emph{IEEE Trans. Wirel. Commun}, vol.~13, no.~10, pp. 5481--5495, Oct 2014.

\bibitem{cvx}
M.~Grant and S.~Boyd, ``{CVX}: Matlab software for disciplined convex
  programming, version 2.1,'' \url{http://cvxr.com/cvx}, Mar. 2014.

\bibitem{chsecross}
C.~She, C.~Yang, and T.~Q.~S. Quek, ``Cross-layer optimization for
  ultra-reliable and low-latency radio access networks,'' \emph{{IEEE} Trans.
  Commun}, vol.~17, no.~1, pp. 127--141, Jan 2018.

\bibitem{Boyed}
S.~Boyd and L.~Vandenberghe, \emph{{Convex Optimization}}.\hskip 1em plus 0.5em
  minus 0.4em\relax New York, NY, USA: Cambridge University Press, 2004.

\end{thebibliography}
\bibliographystyle{IEEEtran}
\end{document}